\newtheorem{fact}{Fact}
\newtheorem{definition}{Definition}
\newtheorem{example}{Example}
\newtheorem{proposition}{Proposition}
\newenvironment{proof}{\noindent\textit{Proof~~}}
{\nolinebreak[4]\hfill$\square$\\\par}
\title{A new model for quantum game based on the Marinatto-Weber approach}
\author{Piotr Fr\c{a}ckiewicz\\
\small Institute of Mathematics, Pomeranian University\\ \small 76-200 S\l upsk, Poland\\ \small P.Frackiewicz@impan.gov.pl}
\begin{document}
\maketitle
\begin{abstract}
The Marinatto-Weber approach to quantum game is a straightforward
way to apply the power of quantum mechanics to classical game
theory. In the simplest case, the quantum scheme is that players
manipulate their own qubits of a two-qubit state either with the
identity $\mathds{1}$ or the Pauli operator $\sigma_{x}$. However,
such a simplification of the scheme raises doubt as to whether it could
really reflect a quantum game. In this paper we put forward
examples which may constitute arguments against the present form
of the Marinatto-Weber scheme. Next, we modify the scheme to
eliminate the undesirable properties of the protocol by extending
the players' strategy sets.
\end{abstract}
%%%%%%%%%%%%%%%%%%%%%%%%%%%%%%%%%%%%%%%%%%%%%%%%%%%%%%%%%%%%%%%%%%%%%%%%%%%%%%
\section{Introduction}
The Marinatto-Weber (MW) scheme \cite{mw} has become one of the
most frequently used schemes for quantum games. Though it was
created for research on Nash equilibria in quantum $2\times 2$
games, it has also found an application in studying some of the
refinements of a Nash equilibrium, such as evolutionarily stable
strategies \cite{iqbal, iqbal2}. Moreover, the MW scheme turns out
to be applicable to finite extensive games \cite{fr}. Among other
applications, the problem of duopoly is worthy noting. It was shown
in \cite{iqbalstack} that the MW scheme can be used in
Stackelberg's model of duopoly. The paper initiated further
studies on the quantum Stackelberg duopoly \cite{correlationoise,
krk1} and the quantum approach to Bertrand duopoly \cite{krk}. These
recent papers show uninterrupted interest in research on quantum
games played according to the MW idea and they provide sufficient
motivation to study this protocol.
%%%%%%%%%%%%%%%%%%%%%%%%%%%%%%%%%%%%%%%%%%%%%%%%%%%%%%%%%%%%%%%%%%%%%%%%%%%%%%%%
\section{Preliminary to the MW approach}
The MW scheme was originally designed for a $2\times 2$ game
\begin{equation}\label{2x2game} \bordermatrix{&l & r \cr
                t& (a_{00},b_{00}) & (a_{01}, b_{01})\cr
                b& (a_{10},b_{10}) & (a_{11},b_{11})},
                ~~\mbox{where}~~a_{ij} \in \mathbb{R}.
\end{equation}
Each of the two players acts with the identity $\mathds{1}$ and
the Pauli operator $\sigma_{x}$ on his own qubit of some fixed
two-qubit state $\rho_{\mathrm{in}} = |\psi_{\mathrm{in}} \rangle
\langle \psi_{\mathrm{in}}|$, which is called the initial state.
The players' payoffs are determined by measurement of the
resulting final state $\rho_{\mathrm{fin}}$. Formally, the final
state takes the following form:
\begin{align}\label{staryfinalstate}
\rho_{\mathrm{fin}} &= pq\mathds{1}\otimes
\mathds{1}\rho_{\mathrm{in}}\mathds{1}\otimes \mathds{1} +
p(1-q)\mathds{1} \otimes \sigma_{x}\rho_{\mathrm{in}}\mathds{1}
\otimes \sigma_{x}\nonumber\\ &\quad + (1-p)q\sigma_{x} \otimes
\mathds{1}\rho_{\mathrm{in}} \sigma_{x} \otimes \mathds{1} +
(1-p)(1-q)\sigma_{x} \otimes \sigma_{x}\rho_{\mathrm{in}}
\sigma_{x} \otimes \sigma_{x},
\end{align}
where $p$ and $q$ are the probabilities of choosing the identity
$\mathds{1}$ by player 1 and player 2, respectively. Then the pair
of players' payoffs $(\pi_{1}, \pi_{2})$ depends on $p$ and $q$,
and through the measurement operator
\begin{equation}\label{payoffoperator}
X = (a_{00},b_{00})|00\rangle \langle 00| +
(a_{01},b_{01})|01\rangle \langle 01| + (a_{10},b_{10})|10\rangle
\langle 10| + (a_{11},b_{11})|11\rangle \langle 11|,
\end{equation}
is given by formula
\begin{equation}\label{payofffunctional}
(\pi_{1}, \pi_{2})(p,q) = \mathrm{tr}(X\rho_{\mathrm{fin}}).
\end{equation}
The MW protocol can
also be expressed by a useful diagram
\begin{equation}\label{diagram} \bordermatrix{&l & r \cr
                t& (a_{00},b_{00}) & (a_{01}, b_{01})\cr
                b& (a_{10},b_{10}) & (a_{11},b_{11})}
\xrightarrow{|\psi_{\mathrm{in}}\rangle =
\sum_{ij}\lambda_{ij}|ij\rangle} \bordermatrix{&\mathds{1} &
\sigma_{x} \cr
                \mathds{1}& (\alpha_{00},\beta_{00}) & (\alpha_{01}, \beta_{01})\cr
                \sigma_{x}& (\alpha_{10},\beta_{10}) & (\alpha_{11},\beta_{11})},
\end{equation}
where $|\psi_{\mathrm{in}}\rangle =
\sum_{ij}\lambda_{ij}|ij\rangle$ is a two-qubit state and
\begin{equation}\label{pary} \begin{array}{ll}
(\alpha_{00}, \beta_{00}) = \sum_{i,j}|\lambda_{i,j}|^2(a_{ij},
b_{ij}), \displaystyle & (\alpha_{01}, \beta_{01}) =
\sum_{i,j}|\lambda_{i,j\oplus_{2}1}|^2(a_{ij}, b_{ij});\\
(\alpha_{10}, \beta_{10}) =
\sum_{i,j}|\lambda_{i\oplus_{2}1,j}|^2(a_{ij}, b_{ij}),
\displaystyle  & (\alpha_{11}, \beta_{11}) =
\sum_{i,j}|\lambda_{i\oplus_{2}1,j\oplus_{2}1}|^2(a_{ij}, b_{ij}).
\end{array}
\end{equation}
The bimatrix on the left of diagram~(\ref{diagram}) represents the
classical (input) $2\times 2$ game while the bimatrix of the right
is the output game obtained by applying the MW protocol.
%%%%%%%%%%%%%%%%%%%%%%%%%%%%%%%%%%%%%%%%%%%%%%%%%%%%%%%%%%%%%%%%%%%%%
%%%%%%%%%%%%%%%%%%%%%%%%%%%%%%%%%%%%%%%%%%%%%%%%%%%%%%%%%%%%%%%%%%%
\paragraph{Advantages of the MW approach} The simplicity of the calculations
and the possibility to easily extend the scheme to consider more
complex games than $2\times 2$ can be counted among the
advantages of the MW scheme. The MW approach to $n\times m$
bimatrix games is defined by the initial state represented by a
vector from $\mathbb{C}^n\otimes\mathbb{C}^m$ and appropriate $n$
and $m$ unitary operators for player 1 and 2, respectively (see
\cite{commentfracor}). In any such case the dimension of the
output game is always equal to the input game, which does not make
the output game more difficult to deal with. In particular,
determining Nash equilibria or evolutionarily stable strategies
have the same level of difficulty in both the classical and
quantum games.

The MW approach has also achieved popularity because it has found
application in infinite games; that is in games where the players'
strategy sets are infinite. For example, in
\cite{iqbalstack} the authors showed a way to apply the MW
approach to Stackelberg's model of duopoly---game in which the
players' strategy sets are identified with interval $[0, \infty)$
(to learn the model of duopoly, see the orginal paper \cite{stack}
or textbook \cite{peters}).
%%%%%%%%%%%%%%%%%%%%%%%%%%%%%%%%%%%%%%%%%%%%%%%%%%%%%%%%%%%%%%%%%%%%%%%%%%
%%%%%%%%%%%%%%%%%%%%%%%%%%%%%%%%%%%%%%%%%%%%%%%%%%%%%%%%%%%%%%%%%%%%%%%%%%%
\section{Undesirable features of
the MW approach}\label{disadvantages} In spite of the simplicity
and large number of applications,
 the MW scheme has been driven by more complex schemes: the
Eisert-Wilkens-Lewenstein approach \cite{ewl} in the case of
quantum $2\times 2$ games and the Li et al. approach
\cite{ciaglyduopol} in the case of various types of duopolies.
Below we point out the main flaws in the MW approach.
\paragraph{Problem of non-classical games induced by separable initial states} The common criticism of the MW scheme is
that the initial state has an excessive impact on the output game.
In other words, the initial state plays the main role in producing
the non-classical output game and the output games may vary
significantly from each other depending on the initial state.
However, in our view, this is not a disadvantage of the scheme.
Quantum information techniques like superdense coding, the Greenberger--Horne--Zeilinger
example, the violation of the Clauser--Horne--Shimony--Holt inequality and many others are
also based on preparing some special quantum state and (or)
operators acting on it so that non-classical results could be
obtained. But in every such protocol the quantum state has to be
an entangled one. Unfortunately, the MW scheme outputs a
non-classical game even if the initial state is separable. Let us
take the following example:
\begin{equation} \label{diagram1}
\bordermatrix{&l & r \cr
                t& (5,3) & (1,1)\cr
                b& (1,1) & (3,5)} \xrightarrow{|\psi_{\mathrm{in}}\rangle =
\frac{1}{\sqrt{2}}(|00\rangle + |01\rangle)}
\bordermatrix{&\mathds{1} & \sigma_{x} \cr \mathds{1}&(3,2) &
(3,2)\cr \sigma_{x}&(2,3) & (2,3)}.
\end{equation}
Though the outcome corresponding to any strategy profile of the
output game can be obtained through a suitable strategy profile in
the input game, the output game is not equivalent to the classical
one. The outcomes greater that 3 are not available in the output
game and the only reasonable result of the game is $(3,2)$---the
result that is never chosen by rational players in the input
game.

\paragraph{Insufficient range of players' actions} The two element set $\{\mathds{1},
\sigma_{x}\}$ of players' strategies seems to be appropriate in the
MW scheme. The identity $\mathds{1}$ and the Pauli matrix
$\sigma_{x}$ are supposed to represent classical moves,
while the initial quantum state $|\psi_{\mathrm{in}}\rangle$ plays
the role of a joint strategy. However, such a setting can lead to
a situation in which a player has no influence on the outcome of
the output game.
 This can be easily observed with example~(\ref{diagram1}). The second qubit is prepared in the superposition $(|0\rangle +
|1\rangle)/\sqrt{2}$. Then, no matter what operation (any
probability distribution on $\mathds{1}$ and $\sigma_{x}$) the
second player performs on his own qubit, he cannot change the
state and therefore he cannot influence the outcome of the game. In
this case the second player would need operators
\begin{equation}
\frac{1}{\sqrt{2}}\begin{pmatrix} 1 & 1 \\ 1 & -1
\end{pmatrix} \quad \mbox{and} \quad \frac{1}{\sqrt{2}}\begin{pmatrix} -1 & 1 \\ 1
&1
\end{pmatrix}\end{equation} in order to obtain the states
$|0\rangle$ and $|1\rangle$, respectively. Only then would
the outcomes of the input game in (\ref{diagram1}) be
achievable.

In fact, any separable state other than $|ij\rangle$ can lead
through the MW approach to output games which are not equivalent
to classical ones. The way to remove this property (keeping the
general MW approach unchanged) is to broaden the range of unitary
operations for players. Since a qubit in a separable state takes
the effective form
\begin{equation}\label{qubit}
\cos\frac{\theta}{2}|0\rangle +
e^{i\varphi}\sin\frac{\theta}{2}|1\rangle,
\end{equation}
two-parameter unitary operators are required to obtain the states
$|0\rangle$ and $|1\rangle$ from state (\ref{qubit}). For this
reason, let us consider unitary operators
\begin{equation} \label{twoparameter}
U(\theta, \varphi) = \displaystyle \begin{pmatrix}
\cos\frac{\theta}{2} &
e^{-i\varphi}\sin\frac{\theta}{2}\\e^{i\varphi}\sin\frac{\theta}{2}
& -\cos\frac{\theta}{2}
\end{pmatrix}
\end{equation}
and assume that each player $i$'s strategies are triples
\begin{equation}\mbox{for player 1:} ~~
\{(p,\theta_{1}, \varphi_{1}) \colon 0\leqslant p \leqslant 1, 0
\leqslant \theta_{1} \leqslant \pi, 0 \leqslant \varphi_{1}
\leqslant 2\pi \},
\end{equation}
\begin{equation}\mbox{for player 2:} ~~
\{(q,\theta_{2}, \varphi_{2}) \colon 0\leqslant q \leqslant 1, 0
\leqslant \theta_{2} \leqslant \pi, 0 \leqslant \varphi_{2}
\leqslant 2\pi \}.
\end{equation}
Next, we define the final state
\begin{align}
\rho_{\mathrm{fin}} &= pqU_{1}\otimes
U_{2}\rho_{\mathrm{in}}U_{1}^{\dag}\otimes U_{2}^{\dag} +
p(1-q)U_{1} \otimes V_{2}\rho_{\mathrm{in}}U_{1}^{\dag} \otimes V_{2}^{\dag}\nonumber\\
&\quad + (1-p)qV_{1} \otimes U_{2}\rho_{\mathrm{in}} V_{1}^{\dag}
\otimes U_{2}^{\dag}+ (1-p)(1-q)V_{1} \otimes
V_{2}\rho_{\mathrm{in}} V_{1}^{\dag} \otimes V_{2}^{\dag},
\end{align}
where
\begin{equation}\label{operators}
U_{i} := U(\theta_{i}, \varphi), \quad V_{i} = U(\pi-\theta_{i},
\varphi_{i} - \pi).
\end{equation}
Then for any separable pure state
\begin{equation}
|\psi_{\mathrm{in}}\rangle =
e^{i\delta}\left(\cos\frac{\theta_{1}}{2}|0\rangle +
e^{i\varphi_{1}}\sin{\frac{\theta_{1}}{2}}|1\rangle\right) \otimes
\left(\cos\frac{\theta_{2}}{2}|0\rangle +
e^{i\varphi_{2}}\sin{\frac{\theta_{2}}{2}}|1\rangle\right),
\end{equation}
each player $i$ can select appropriate $\theta_{i}$ and
$\varphi_{i}$ such that the operators (\ref{operators}) create the
final state
\begin{equation}
\rho_{\mathrm{fin}} = pq|00\rangle \langle00| + p(1-q)|01\rangle
\langle 01| + (1-p)q|10\rangle \langle 10| + (1-p)(1-q)|11\rangle
\langle 11|,
\end{equation}
which implies the input (classical) game in the MW scheme.

The extension (\ref{twoparameter})-(\ref{operators}) removes the
players' powerlessness against maximal superpositions.
Moreover, it guarantees that the classical game can be
reconstructed in the case of separable states. The solution,
however, seems a bit artificial. One may question why the players
are not allowed to use the full range of unitary operators or why
a probability distribution over the two-parameter operators is not
given by a probability density function. Moreover, the refinement
(\ref{twoparameter})-(\ref{operators}) does not solve the
following problem:

\paragraph{Choice of the initial state} The choice of the initial
state has a significant impact on the output game. However, the
MW protocol does not allow the players to take a part
in choosing the initial state. It causes many undesirable features
of the scheme. In particular, there is a possibility that the
output game induced by some fixed initial state could favor one
of the players and be unjust to the other one, even if the
strategic positions in the input game are completely identical. In
other words, it would be the case that a player is not
satisfied by the initial state and she would be willing to change
the initial state if she were allowed to.  Since the initial state
is supposed to be the players' joint strategy, the players should have
some influence in choosing the initial state, otherwise it stands
in contradiction with game-theoretical sense of strategy being an
element chosen by a player.

Furthermore, the fact that the initial state is treated as the
joint strategy ought to be taken into consideration in
game-theoretical solution concepts. For example, in a Nash
equilibrium no player gains by deviating from the equilibrium
strategy. Therefore, each player should have the possibility of
rejecting the initial state $|\psi_{\mathrm{in}}\rangle$ if the
`classical' state $|00\rangle$ would increase his payoff.
%%%%%%%%%%%%%%%%%%%%%%%%%%%%%%%%%%%%%%%%%%%%%%%%%%%%%%%%%%%%%%%%%%

\section{A new model based on the MW scheme} Let us modify the MW
scheme to obtain a protocol which is free from the flaws listed in
the previous section. In order to do that let us redefine the
final state (\ref{staryfinalstate}) keeping the operator
(\ref{payoffoperator}) and payoff functional~(\ref{payofffunctional}) unchanged. We assume now that the players are additionally allowed to choose whether they play the
classical or quantum game. That is, each player decides to choose
either $\mathds{1}$ or $\sigma_{x}$ and at the same time whether
to perform the chosen local action on $|00\rangle$ or
$|\psi_{\mathrm{in}}\rangle$. Since the quantum state
$|\psi_{\mathrm{in}}\rangle$ is considered the joint strategy, we
assume that if the both players pick the action `quantum', the
chosen actions from the set $\{\mathds{1},\sigma_{x}\}$ are
performed on state $|\psi_{\mathrm{in}}\rangle$, otherwise the
players apply the local actions to state $|00\rangle$.

Denote by $C$ and $Q$ the actions `classical' and `quantum',
respectively. In such a scheme each player has four pure
strategies which we denote by $C\times \mathds{1}, C \times
\sigma_{x}, Q\times \mathds{1}$ and $Q \times \sigma_{x}$. Thus,
 the extended Marinatto-Weber (eMW) scheme is formally defined as
follows:
\begin{definition} \label{nowyskim}
If $|\psi_{\mathrm{in}}\rangle \in \mathbb{C}^2 \otimes
\mathbb{C}^2 $ represents a pure quantum state, $\tau_{1} =
(p_{1}, p_{2}, p_{3}, p_{4})$ and $\tau_{2} = (q_{1}, q_{2},
q_{3}, q_{4})$ are mixed strategies over $\{C\times \mathds{1}, C
\times \sigma_{x}, Q\times \mathds{1}, Q \times \sigma_{x}\}$ of
player 1 and 2, respectively, i.e. $p_{i}, q_{i} \geqslant 0$,
$\sum_{i}p_{i}=1$ and $\sum_{i}q_{i}=1$ and $X$ is as in
(\ref{payoffoperator}), then the extended final state
$\rho_{\mathrm{ext}}$ is given by
\begin{align} \label{newinitialstate}
\rho_{\mathrm{ext}} &= \mathds{1}\otimes \mathds{1}
\bigl((p_{1}q_{1} + p_{1}q_{3} + p_{3}q_{1})|00\rangle \langle 00|
+ p_{3}q_{3}|\psi_{\mathrm{in}}\rangle \langle
\psi_{\mathrm{in}}|\bigr)\mathds{1}\otimes \mathds{1} \nonumber\\
&\quad + \mathds{1}\otimes \sigma_{x}\bigl((p_{1}q_{2} +
p_{1}q_{4}+p_{3}q_{2})|00\rangle \langle 00| +
p_{3}q_{4}|\psi_{\mathrm{in}}\rangle \langle
\psi_{\mathrm{in}}|\bigr)\mathds{1}\otimes \sigma_{x} \nonumber\\
&\quad + \sigma_{x}\otimes \mathds{1} \bigl((p_{2}q_{1} +
p_{2}q_{3} + p_{4}q_{1})|00\rangle \langle 00| +
p_{4}q_{3}|\psi_{\mathrm{in}}\rangle \langle \psi_{\mathrm{in}}|
\bigr)\sigma_{x}\otimes \mathds{1} \nonumber\\ &\quad +
\sigma_{x}\otimes \sigma_{x} \bigl((p_{2}q_{2} + p_{2}q_{4} +
p_{4}q_{2})|00\rangle \langle 00| +
p_{4}q_{4}|\psi_{\mathrm{in}}\rangle \langle
\psi_{\mathrm{in}}|\bigr)\sigma_{x} \otimes \sigma_{x}
\end{align}
and the payoff pair is $\pi(\tau_{1}, \tau_{2}) = \mathrm{tr}(X
\rho_{\mathrm{ext}})$.
\end{definition}
Let us see how formula (\ref{newinitialstate}) works. If, for
example, player 1 chooses operation $\mathds{1}$ and the actions
`classical' and `quantum' with equal probability (which means
choosing $C \times \mathds{1}$ and $Q \times \mathds{1}$ with
equal probability) and player 2 picks $Q \times \sigma_{x}$, then
$p_{1}=p_{3} = 1/2$ and $q_{4} = 1$. It implies that formula
(\ref{newinitialstate}) takes on $\rho_{\mathrm{ext}} =
\bigl(|01\rangle \langle 01| + \mathds{1}\otimes \sigma_{x}
|\psi_{\mathrm{in}}\rangle \langle \psi_{\mathrm{in}}| \mathds{1}
\otimes \sigma_{x}\bigr)$/2.

Like in the MW scheme there exists a convenient way to express the
eMW scheme using bimatrix form, in this case by means of a
$4\times 4$ bimatrix game. The individual entries of the bimatrix
can be obtained by determining the payoffs $\pi(\tau_{1},
\tau_{2})$ for each of the 16 possible pure profiles
\begin{equation}
(\tau_{1}, \tau_{2}) \in \bigl\{\{C, Q\} \times \{\mathds{1},
\sigma_{x}\}\bigr\}_{1} \times \bigl\{\{C, Q\} \times
\{\mathds{1}, \sigma_{x}\}\bigr\}_{2}.
\end{equation}
Another way to obtain the bimatrix form is to use diagram
(\ref{diagram}). Indeed, formula (\ref{newinitialstate}) says that
players play the game on the right of diagram (\ref{diagram}) if
they both have chosen $Q$. In this case $p_{1} = p_{2} = q_{1} =
q_{2} = 0$ and then formula (\ref{newinitialstate}) coincides with
the final state (\ref{staryfinalstate}). In the other cases (i.e.
given that $p_{3}q_{3} = p_{3}q_{4} = p_{4}q_{3} = p_{4}q_{4} =
0$) the players play the classical game (on the left of diagram
(\ref{diagram})) because they perform $\mathds{1}$ and
$\sigma_{x}$ on state $|00\rangle \langle 00|$ then. We thus
obtain the following bimatrix counterpart:
\begin{equation}\label{4x4}
\bordermatrix{&C \times \mathds{1} & C \times \sigma_{x} & Q\times
\mathds{1} & Q \times \sigma_{x} \cr  C \times
\mathds{1}&(a_{00},b_{00}) & (a_{01}, b_{01}) & (a_{00},b_{00}) &
(a_{01},b_{01})\cr C\times \sigma_{x}&(a_{10},b_{10}) & (a_{11},
b_{11}) & (a_{10},b_{10}) & (a_{11},b_{11}) \cr
 Q\times \mathds{1}&(a_{00},b_{00}) & (a_{01}, b_{01}) &
(\alpha_{00},\beta_{00}) & (\alpha_{01},\beta_{01}) \cr Q \times
\sigma_{x}&(a_{10},b_{10}) & (a_{11}, b_{11}) &
(\alpha_{10},\beta_{10}) & (\alpha_{11},\beta_{11}) },
\end{equation}
where $(\alpha_{ij}, \beta_{ij})$ are as in (\ref{pary}). The eMW
scheme generalizes the classical way of playing games. If
$|\psi_{\mathrm{in}}\rangle = |00\rangle$, the strategies from the
set $\{C,Q\} \times \mathds{1}$ and $\{C,Q\} \times \sigma_{x}$
are {\it equivalent} to each other, i.e.  they induce the same
payoff pairs for any pure strategy played by the opponent.
Therefore, the {\it quotient} game obtained by removing equivalent
strategies (leaving the one representative) coincides with the
original one in respect of the game outcomes and players'
strategic positions.

In general, the eMW scheme provides output games quite different
from ones induced by the MW scheme. Since a unilateral deviation
from the classical action $C$ does not make the players play on
$|\psi_{\mathrm{in}}\rangle$, a Nash equilibrium in an input game
(\ref{2x2game}) remains the equilibrium in the quantum
counterpart~(\ref{4x4}). Moreover, in a case that the players do
not find the quantum strategy mutually advantageous, the classical
equilibria are the only ones in game (\ref{4x4}). This property is
well illustrated by the following example:
\begin{example}\label{examplemarinatto}
\textup{Let us consider a $2\times2$ game given by bimatrix
\begin{equation}
\bordermatrix{&l & r \cr t & (\alpha, \beta) & (\gamma, \gamma)
\cr b & (\gamma,\gamma) & (\beta, \alpha)},  ~~\text{where}~~
\alpha, \beta, \gamma \in \mathbb{R} ~~\text{and}~~ \alpha
> \beta
> \gamma.
\end{equation}
This is the general form of a game called Battle of the Sexes (the
game on the left of diagram (\ref{diagram1}) is the particular
case). The example was used in \cite{mw} to show that the players
playing the quantum strategy $|\psi_{\mathrm{in}}\rangle =
(|00\rangle + |11\rangle)/\sqrt{2}$ have access to Nash
equilibria with better results than in the case of playing only
classical strategies (see also \cite{benjamin} and
\cite{frackiewiczbattle}). The eMW scheme shows, however, that
state $(|00\rangle + |11\rangle)/\sqrt{2}$ would never be chosen
by the players. Indeed, if the players are allowed to decide
whether to choose the entangled state or not, they face the
following game:}
\begin{equation} \label{przyklad4x4}
\bordermatrix{&C \times \mathds{1} & C \times \sigma_{x} & Q
\times \mathds{1} & Q \times \sigma_{x} \cr  C \times
\mathds{1}&(\alpha,\beta) & (\gamma, \gamma) & (\alpha,\beta) &
(\gamma,\gamma)\cr C\times \sigma_{x}&(\gamma,\gamma) &
(\beta,\alpha) & (\gamma,\gamma) & (\beta,\alpha) \cr
 Q\times \mathds{1}&(\alpha,\beta) & (\gamma, \gamma) &
\left(\displaystyle\frac{\alpha+\beta}{2},
\displaystyle\frac{\alpha+\beta}{2}\right) & (\gamma,\gamma) \cr Q
\times \sigma_{x}&(\gamma,\gamma) & (\beta, \alpha) &
(\gamma,\gamma) & \left(\displaystyle\frac{\alpha+\beta}{2},
\displaystyle\frac{\alpha+\beta}{2}\right) }
\end{equation}
\textup{Now, if the players were restricted to using only $Q\times
\{\mathds{1}, \sigma_{x}\}$, game (\ref{przyklad4x4}) would come
down to one associated with the MW scheme and strategy profiles:
$(Q \times \mathds{1}, Q \times \mathds{1}), (Q \times \sigma_{x},
Q \times \sigma_{x})$ and one in which each player plays $Q \times
\mathds{1}$ and $Q \times \sigma_{x}$ with equal probability would
constitute Nash equilibria. However, these profiles are no longer
equilibria in game (\ref{przyklad4x4}). The first player's best
response to $Q \times \mathds{1}$ of player 2 is $C \times
\mathds{1}$ instead of $Q \times \mathds{1}$. The second pure
profile is not an equilibrium for similar reasons. Next, for
probability distribution $\{1/2,1/2\}$ over $Q \times
\{\mathds{1}, \sigma_{x}\}$ played by one of the players, the best
response of the other player is $C\times \mathds{1}$. As a result,
since a Nash equilibrium is considered a necessary condition for a
strategy profile to be a reasonable one, the quantum strategy
$(|00\rangle + |11\rangle)/\sqrt{2}$ would not be chosen by the
players.} \textup{It does not have to indicate, however, that the
players cannot benefit from quantum strategies. Let us consider
similar (in terms of the MW scheme) state $(|01\rangle +
|10\rangle)/\sqrt{2}$. The MW approach through this state also
implies two pure Nash equilibria with outcome $(\alpha + \beta)/2$
for each player and the Nash equilibrium in mixed strategies.
However, only $(Q\times \mathds{1}, Q\times \sigma_{x})$ remains
an equilibrium if the players are able to decide whether to use
the quantum state or not since they play the game given by
bimatrix
\begin{equation} \label{przyklad4x4a}
\bordermatrix{&C \times \mathds{1} & C \times \sigma_{x} & Q
\times \mathds{1} & Q \times \sigma_{x} \cr  C \times
\mathds{1}&(\alpha,\beta) & (\gamma, \gamma) & (\alpha,\beta) &
(\gamma,\gamma)\cr C\times \sigma_{x}&(\gamma,\gamma) &
(\beta,\alpha) & (\gamma,\gamma) & (\beta,\alpha) \cr
 Q\times \mathds{1}&(\alpha,\beta) & (\gamma, \gamma) &
(\gamma, \gamma) & \left(\displaystyle\frac{\alpha+\beta}{2},
\displaystyle\frac{\alpha+\beta}{2}\right) \cr Q \times
\sigma_{x}&(\gamma,\gamma) & (\beta, \alpha) &
\left(\displaystyle\frac{\alpha+\beta}{2},
\displaystyle\frac{\alpha+\beta}{2}\right) & (\gamma, \gamma) }.
\end{equation}
The opposite players' preferences in the classical battle of the
sexes game (player 1 prefers $(C\times \mathds{1},C\times
\mathds{1})$, whereas player 2 prefers $(C\times
\sigma_{x},C\times \sigma_{x})$ if game is played classically)
make $(Q\times \mathds{1}, Q\times \sigma_{x})$ the most
reasonable profile.}
\end{example}
\section{The disadvantages of the MW approach versus the eMW
scheme} Since a quantum state $|\psi_{\mathrm{in}}\rangle$ is
considered to be a players' joint strategy, the two additional
actions: $C$ (classical) and $Q$ (quantum) appear to be a natural
extension of the MW scheme. It turns out that this extension is
sufficient to remove the disadvantages we listed in
section~\ref{disadvantages}. The problem of the powerlessness of
players' actions (when the qubits are the maximal superpositions)
is not a concern in the extended scheme. The action $C$ played by
just one of the players turns the game into a classical one. For
the same reason, each outcome of the classical game is still
available in the quantum counterpart (\ref{4x4}). The extended
scheme also removes, in some sense, the problem concerning the
choice of the initial state since the players can decide whether
to play the quantum state or not. The fact that the initial state
has to be chosen by an arbiter beforehand just means that there
are infinitely many possible quantum extensions of some classical
game. Every such an extension is associated with some initial
quantum state then.

\paragraph{Division of games into classical and quantum ones}
It has been proved that classical correlations can always be
associated with separable states \cite{gisin2} and any pure
entangled state violates some Bell inequality \cite{gisin,
popescu}.  So, an interesting problem is to examine if the eMW
scheme could give us a similar hierarchical structure in quantum
game theory.

First, let us assign to mixed state (\ref{newinitialstate}) a set
of pure quantum states
 by means of the following simple fact:
\begin{fact}
Let $X$ be as in (\ref{payoffoperator}). For any density operator
$\rho$ on $\mathbb{C}^2 \otimes \mathbb{C}^2$ there exists a pure
state $|\psi\rangle \in \mathbb{C}^2 \otimes \mathbb{C}^2$ and a
mixed state $\sum_{ij=0,1}\lambda_{ij}|ij\rangle \langle ij|$ such
that
\begin{equation}\label{equality}
\mathrm{tr}(X \rho) = \mathrm{tr}(X |\psi\rangle \langle \psi |) =
\mathrm{tr}\left(X \sum_{ij=0,1}\lambda_{ij}|ij\rangle \langle
ij|\right).
\end{equation}
\end{fact}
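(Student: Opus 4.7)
The plan is to exploit the fact that the measurement operator $X$ in (\ref{payoffoperator}) is diagonal in the computational basis $\{|00\rangle, |01\rangle, |10\rangle, |11\rangle\}$. Consequently, for any density operator $\rho$ on $\mathbb{C}^2 \otimes \mathbb{C}^2$, the trace $\mathrm{tr}(X\rho)$ depends only on the diagonal entries $\langle ij|\rho|ij\rangle$, because the off-diagonal entries of $\rho$ are annihilated by the diagonal operator $X$ under the trace. This observation reduces the problem to producing a pure state and a computational-basis mixture that share the same diagonal as $\rho$.

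First I would write $\mathrm{tr}(X\rho) = \sum_{ij}(a_{ij},b_{ij})\langle ij|\rho|ij\rangle$ and define $\lambda_{ij} := \langle ij|\rho|ij\rangle$. Since $\rho$ is a density operator, the numbers $\lambda_{ij}$ are nonnegative and sum to one, so the classical mixture $\sum_{ij}\lambda_{ij}|ij\rangle\langle ij|$ is a legitimate density operator, and its expectation against $X$ coincides with $\mathrm{tr}(X\rho)$ by construction — this immediately gives the second equality in (\ref{equality}).

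Next I would define the pure state
\begin{equation}
|\psi\rangle := \sum_{ij=0,1}\sqrt{\lambda_{ij}}\,|ij\rangle,
\end{equation}
which is normalised since $\sum_{ij}\lambda_{ij}=1$. A direct computation shows $\langle ij|\psi\rangle\langle\psi|ij\rangle = \lambda_{ij}$, so the diagonal of $|\psi\rangle\langle\psi|$ in the computational basis equals that of $\rho$. Applying the diagonal-operator observation once more gives $\mathrm{tr}(X|\psi\rangle\langle\psi|) = \sum_{ij}(a_{ij},b_{ij})\lambda_{ij} = \mathrm{tr}(X\rho)$, completing the first equality in (\ref{equality}).

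There is no real obstacle here; the statement is essentially the remark that, since $X$ is diagonal in the product basis, only the diagonal of the state matters for the payoff, and any prescribed probability distribution $\{\lambda_{ij}\}$ on $\{|ij\rangle\}$ can be realised both by the classical mixture and by the ``square-root'' pure state. The only thing worth emphasising in the write-up is that the conclusion would fail for a non-diagonal $X$, which is what makes the eMW payoff structure compatible with such a simple reduction.
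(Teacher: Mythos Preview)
Your proof is correct and follows essentially the same approach as the paper: both take $\lambda_{ij}=\langle ij|\rho|ij\rangle$ (the diagonal entries of $\rho$) and set $|\psi\rangle=\sum_{ij}\sqrt{\lambda_{ij}}\,|ij\rangle$, using the fact that $X$ is diagonal in the computational basis so only the diagonal of $\rho$ contributes to $\mathrm{tr}(X\rho)$. Your write-up is simply more explicit about why the construction works.
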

\begin{proof}
Let $\rho = \sum_{i,j,k,l=0,1}O_{ijkl}|ij\rangle \langle kl|$ be a
density operator. Putting state $|\psi\rangle = \sum_{i,j =
0,1}\sqrt{O_{ijij}}|ij\rangle$ and numbers $\lambda_{ij} =
O_{ijij}$ we obtain equality (\ref{equality}).
\end{proof}
Thus, we can always relate the final state $\rho_{\mathrm{ext}}$
to an equivalent pure state (with respect to the measurement $X$).
Such an assignment is necessary because specification of the MW
scheme allows players to obtain strictly non-classical results
with separable mixed states. For instance, separable state
$\rho_{\mathrm{ext}} = (|00\rangle \langle 00| + |11\rangle
\langle 11|)/2$ given by taking $p_{3} = 1$, $q_{1} = q_{3} = 1/2$
and $|\psi_{\mathrm{in}}\rangle = |11\rangle$ in
(\ref{newinitialstate}) induces
$\mathrm{tr}(X\rho_{\mathrm{ext}})=\bigl((a_{00},b_{00}) +
(a_{11},b_{11})\bigr)/2$---the outcome that is not available
with the use of classical mixed strategies. However, if we
restrict ourselves to pure states, only entangled pure state
$(|00\rangle + e^{i\varphi}|11\rangle)/\sqrt{2}$ (up to the phase
factor) generates that outcome. Note that the state
$|\psi_{\mathrm{in}}\rangle$ is just a part of
$\rho_{\mathrm{ext}}$. Thus, it does not have to be entangled so
that the final state generates a non-classical game.
\begin{example}\label{examplegraniepewnosc}
\textup{Let us consider the following bimatrix game:
\begin{equation}\label{graniepewnosc}
\bordermatrix{&l & r \cr t & (5, 5) & (0, 4) \cr b & (4,0) &
(2,2)}.
\end{equation}
The game exhibits a conflict between two reasonable types of
equilibria. The profile $(b,r)$ is a risk dominant equilibrium
\cite{harsai} (see also \cite{frackiewiczbattle}). If one of the
players is uncertain about the move of the other one, a risk
dominant strategy turns out to be a reasonable choice. On the
other hand, the profile $(t,l)$ is a payoff dominant equilibrium
for which both players receive much more.}

\textup{The uncertainty about the result of game
(\ref{graniepewnosc}) disappears in the quantum domain. The final
state $\rho_{\mathrm{ext}}$ for $|\psi_{\mathrm{in}}\rangle =
|11\rangle$ implies bimatrix
\begin{equation}\label{gra24}
\bordermatrix{&C \times \mathds{1} & C \times \sigma_{x} & Q\times
\mathds{1} & Q \times \sigma_{x} \cr  C \times \mathds{1}&(5,5) &
(0, 4) & (5,5) & (0,4)\cr C\times \sigma_{x}&(4,0) & (2,2) & (4,0)
& (2,2) \cr
 Q\times \mathds{1}&(5,5) & (0, 4) &
(2, 2) & (4,0) \cr Q \times \sigma_{x}&(4,0) & (2, 2) & (0,4) &
(5, 5)}
\end{equation}
The profile $(Q \times \mathds{1}, Q \times \mathds{1})$
corresponding to the risk-dominant outcome is no longer
equilibrium in (\ref{gra24}). If the players decide to play
quantum strategies, the equilibrium $(Q\times \sigma_{x}, Q\times
\sigma_{x})$ that provides the players with the payoff dominant
outcome is the unique reasonable profile.}
\end{example}
We hypothesize that the association with a pure separable state is
a sufficient condition on $\rho_{\mathrm{ext}}$ for a game to be
classical. It is not work in the case of the original MW scheme.
In fact, if $|\psi_{\mathrm{in}} \rangle$ is separable, we can
always find a separable pure state that is equivalent to
(\ref{staryfinalstate}) (with regard to
measurement~(\ref{payoffoperator})) but, as
diagram~(\ref{diagram1}) shows, the output game does not coincide
with the classical one. In the eMW scheme the players have a
greater impact on the final state $\rho_{\mathrm{ext}}$ despite
the same local operators $\mathds{1}$ and $\sigma_{x}$. It follows
from the fact that the players do not perform $\mathds{1} \otimes
\mathds{1}$, $\mathds{1} \otimes \sigma_{x}$, $\sigma_{x} \otimes
\mathds{1}$ and $\sigma_{x} \otimes \sigma_{x}$ on the whole
quantum state but on the appropriate parts of it. This distinction
is relevant as it allows us to formulate the following
proposition:
\begin{proposition} \label{propozycja}
A quotient game of a game specified by definition~\ref{nowyskim}
coincides with the input game (\ref{2x2game}) if and only if for
any players' strategies ${\tau_{1}}$ and $\tau_{2}$, the final
state (\ref{newinitialstate}) satisfies equation (\ref{equality})
for some separable pure state $|\psi\rangle \in
\mathbb{C}^2\otimes \mathbb{C}^2$.
\end{proposition}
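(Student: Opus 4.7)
The plan is to reduce the biconditional to a clean equality among payoff pairs and then handle each direction separately. By inspecting the $4\times 4$ bimatrix (\ref{4x4}), each row and each column labelled by $Q$ duplicates the corresponding $C$ row/column exactly when $(\alpha_{ij},\beta_{ij})=(a_{ij},b_{ij})$ for every $i,j\in\{0,1\}$; this is the structural condition $(\star)$ equivalent to ``the quotient game coincides with (\ref{2x2game})''. I will show that $(\star)$ is equivalent to the right-hand side of the proposition.

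For the $(\Rightarrow)$ direction I assume $(\star)$ and expand $\mathrm{tr}(X\rho_{\mathrm{ext}})$ term by term via (\ref{newinitialstate}). The four MW blocks collapse onto the corresponding classical ones, and introducing the classical marginals $\tilde p_{0}=p_{1}+p_{3}$, $\tilde p_{1}=p_{2}+p_{4}$, $\tilde q_{0}=q_{1}+q_{3}$, $\tilde q_{1}=q_{2}+q_{4}$, the payoff reduces to $\sum_{ij}\tilde p_{i}\tilde q_{j}(a_{ij},b_{ij})$. The separable pure state $|\phi\rangle=(\sqrt{\tilde p_{0}}|0\rangle+\sqrt{\tilde p_{1}}|1\rangle)\otimes(\sqrt{\tilde q_{0}}|0\rangle+\sqrt{\tilde q_{1}}|1\rangle)$ has computational-basis diagonal $\tilde p_{i}\tilde q_{j}$, so (\ref{equality}) holds with this $|\phi\rangle$.

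For the $(\Leftarrow)$ direction I probe the hypothesis with two one-parameter families. With $\tau_{1}=(1-p,0,p,0)$ and $\tau_{2}=(1-q,0,q,0)$, definition~\ref{nowyskim} gives $\rho_{\mathrm{ext}}=(1-pq)|00\rangle\langle00|+pq|\psi_{\mathrm{in}}\rangle\langle\psi_{\mathrm{in}}|$, whose diagonal is $\bigl(1-pq(1-|\lambda_{00}|^{2}),\,pq|\lambda_{01}|^{2},\,pq|\lambda_{10}|^{2},\,pq|\lambda_{11}|^{2}\bigr)$. Parametrising separable pure states by $(r,s)\in[0,1]^{2}$ with diagonals $(rs,r(1-s),(1-r)s,(1-r)(1-s))$ and assuming that the four pairs $(a_{ij},b_{ij})$ are in sufficiently general position (so that matching the payoff in (\ref{equality}) forces matching of the diagonals), the rank-one identity $rs(1-r)(1-s)=r(1-s)(1-r)s$ turns the existential hypothesis into the polynomial identity $|\lambda_{11}|^{2}\bigl[1-pq(1-|\lambda_{00}|^{2})\bigr]=pq|\lambda_{01}|^{2}|\lambda_{10}|^{2}$ on $[0,1]^{2}$. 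Equating the constant and $pq$ coefficients yields $|\lambda_{11}|^{2}=0$ and $|\lambda_{01}|^{2}|\lambda_{10}|^{2}=0$.

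A second family, for instance $\tau_{1}=(1-p,0,0,p)$ and $\tau_{2}=(0,1-q,q,0)$, closes the gap: after the WLOG choice $|\lambda_{10}|^{2}=0$, an analogous diagonal computation and rank-one constraint force $pq|\lambda_{01}|^{2}=0$ for every $p,q$, hence $|\lambda_{01}|^{2}=0$. Therefore $|\psi_{\mathrm{in}}\rangle=e^{i\delta}|00\rangle$, and a direct check via (\ref{pary}) establishes $(\star)$. The main obstacle is precisely this backward step: converting ``for some separable pure state'' into a sharp polynomial identity requires both the rank-one structure of separable pure-state diagonals and a genericity assumption on the payoff vectors $(a_{ij},b_{ij})$, so degenerate games would need separate handling.
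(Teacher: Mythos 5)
Your argument is correct and follows essentially the same route as the paper: both directions hinge on reducing (\ref{equality}) to the rank-one condition $d_{00}d_{11}=d_{01}d_{10}$ on the diagonal of $\rho_{\mathrm{ext}}$, exhibiting the product state built from the marginals $\tilde p_i\tilde q_j$ for the forward implication, and probing the condition with specific strategy profiles for the converse (you use two one-parameter families and compare coefficients in $pq$, where the paper evaluates at three fixed profiles with weights $1/2$, but the mechanism is identical). The genericity caveat you flag at the end is a fair observation but not a defect relative to the paper, which silently reads the hypothesis as ``the pure state canonically assigned to $\rho_{\mathrm{ext}}$ by the Fact is separable,'' under which the diagonal condition is immediate without any assumption on the payoffs.
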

\begin{proof}
Let us first assume that the quotient game is equal to the input
game (\ref{2x2game}). In this case the pairs of strategies $C
\times \mathds{1}, Q \times \mathds{1}$ and $C \times \sigma_{x},
Q \times \sigma_{x}$ have to be equivalent to each other which
imposes $|\psi_{\mathrm{in}}\rangle = |00\rangle$ (up to the phase
factor) on the final state (\ref{newinitialstate}) and therefore
\begin{equation}
\rho_{\mathrm{ext}} = \bigl((p_{1} + p_{3})|0\rangle \langle 0| +
(p_{2} + p_{4})|1\rangle \langle 1|\bigr) \otimes \bigl((q_{1} +
q_{3})|0\rangle \langle 0| + (q_{2} + q_{4})|1\rangle \langle 1|
\bigr).
\end{equation}
Then for any $p_{i}, q_{i}$, a separable state
\begin{equation}
|\psi\rangle = \bigl(\sqrt{p_{1} + p_{3}}|0\rangle + \sqrt{p_{2} +
p_{4}}|1\rangle\bigr) \otimes \bigl(\sqrt{q_{1} + q_{3}}|0\rangle
+ \sqrt{q_{2} + q_{4}}|1\rangle\bigr)
\end{equation}
satisfies the equation $\mathrm{tr}(X \rho_{\mathrm{fin}}) =
\mathrm{tr}(X|\psi\rangle \langle \psi|)$.

On the other hand, let us determine the general form of the final
state $\rho_{\mathrm{ext}}$. The equation (\ref{equality}) shows
that
 we can replace
$|\psi_{\mathrm{in}}\rangle \langle \psi_{\mathrm{in}}|$ with some
mixed state $\sum_{i,j = 0,1} \eta_{ij}|ij\rangle \langle ij|$
without loss of generality. Thus the general final state can be
written as follows:
\begin{align}\label{stateproof}
\rho_{\mathrm{ext}} &= (p_{1}q_{1} + p_{1}q_{3} + p_{3}q_{1} +
p_{3}q_{3}\eta_{00} + p_{3}q_{4}\eta_{01} + p_{4}q_{3}\eta_{10} +
p_{4}q_{4}\eta_{11})|00\rangle \langle 00|\nonumber \\
&\quad+(p_{1}q_{2} + p_{1}q_{4} + p_{3}q_{2} +
p_{3}q_{3}\eta_{01}+ p_{3}q_{4}\eta_{00} + p_{4}q_{3}\eta_{11} +
p_{4}q_{4}\eta_{10})|01\rangle \langle 01| \nonumber \\
&\quad+(p_{2}q_{1} + p_{2}q_{3} + p_{4}q_{1} + p_{3}q_{3}\eta_{10}
+ p_{3}q_{4}\eta_{11} + p_{4}q_{3}\eta_{00} +
p_{4}q_{4}\eta_{01})|10\rangle \langle 10| \nonumber \\
&\quad+ (p_{2}q_{2} + p_{2}q_{4} + p_{4}q_{2} +
p_{3}q_{3}\eta_{11} + p_{3}q_{4}\eta_{10} + p_{4}q_{3}\eta_{01} +
p_{4}q_{4}\eta_{00})|11\rangle \langle 11|.
\end{align}
Now, note that if $|\psi\rangle$ is separable, the third term in
the equation (\ref{equality}) satisfies a separability condition
$\lambda_{00}\lambda_{11} = \lambda_{01} \lambda_{10}$. Since for
any $p_{i}, q_{i}$, state~(\ref{stateproof}) is associated with
some separable state, the separability condition gives equations
\begin{equation} \label{3equations}
\left\{\begin{array}{ll} (2+\eta_{10})\eta_{01} =
(1+\eta_{00})\eta_{11} &\text{for}~~ p_{1} = p_{4} = q_{1} =
q_{3} = \frac{1}{2};\\
(2+ \eta_{01})\eta_{10} = (1+\eta_{00})\eta_{11} &\text{for}~~
p_{1} = p_{3} = q_{1} = q_{4} = \frac{1}{2};\\
(3+\eta_{00})\eta_{11} = \eta_{01}\eta_{10} &\text{for}~~p_{1} =
p_{3} = q_{1} = q_{3} = \frac{1}{2}.
\end{array}\right.
\end{equation}
Together with the normalization condition $\sum_{i,j}\eta_{i,j} =
1$, equations (\ref{3equations}) imply $\eta_{00}=1$. As a result,
state (\ref{stateproof}) determines a game which corresponds to
the input one (up to a quotient game).
\end{proof}
Proposition~\ref{propozycja} tells us that
$|\psi_{\mathrm{in}}\rangle \ne |00\rangle$ in
(\ref{newinitialstate}) provides the players with nonclassical
strategies and, as examples~\ref{examplemarinatto} and
\ref{examplegraniepewnosc} show, may significantly influence the
game result. The non-classical result can be obtained only if the
final state relates to some entangled pure state. Thus, like in
the case of theory of quantum correlation, there exists a
hierarchical structure for quantum games in regard to separable
and entangled states.
%%%%%%%%%%%%%%%%%%%%%%%%%%%%%%%%%%%%%%%%%%%%%%%%%%%%%%%%%%%%%%%%%%%%%%%%%%%%%
\section{Further generalization of the scheme}
\paragraph{More quantum strategies}
The eMW approach can be generalized to allow for more than one
joint strategy $|\psi_{\mathrm{in}}\rangle$. Denote by
$|\psi_{1}\rangle, \dots, |\psi_{n}\rangle \in \mathbb{C}^2
\otimes \mathbb{C}^2$ the $n$ available joint quantum strategies
and by $Q_{1}, \dots, Q_{n}$ the corresponding `quantum' actions.
Then, together with the local actions $\mathds{1}$ and
$\sigma_{x}$, each player has the $2(n+1)$-element set of
strategies $\{C, Q_{1}, \dots, Q_{n}\} \times \{\mathds{1},
\sigma_{x}\}$. Similarly to (\ref{4x4}), we assume that the
quantum strategy $|\psi_{i}\rangle$ is played only if both players
choose $Q_{i}$. Otherwise, the players perform the local
operations on $|00\rangle \langle 00|$. As a result, the output
game can be characterized by matrix
\begin{equation}\label{ostatnia4x4}
\bordermatrix{&C \times \{\mathds{1},\sigma_{x}\} & Q_{1} \times
\{\mathds{1},\sigma_{x}\} & \cdots & Q_{n} \times \{\mathds{1},
\sigma_{x}\} \cr \;\, C \times \{\mathds{1}, \sigma_{x}\}
&|00\rangle \langle 00| & |00\rangle \langle 00| & \cdots &
|00\rangle \langle 00| \cr Q_{1}\times \{\mathds{1}, \sigma_{x}\}
& |00\rangle \langle 00| & |\psi_{1}\rangle \langle \psi_{1}| &
\cdots& |00\rangle \langle 00| \cr ~~~~~\vdots &\vdots & \vdots &
\ddots & \vdots \cr Q_{n} \times \{\mathds{1}, \sigma_{x}\} &
|00\rangle \langle 00|&|00\rangle \langle 00|& \cdots &
|\psi_{n}\rangle \langle \psi_{n}|}.
\end{equation}
Formally, the construction of the appropriate final state is as
follows. Denote by $(p_{1}, p_{2}, \dots, p_{2(n + 1)})$ and
$(q_{1}, q_{2}, \dots, q_{2(n + 1)})$ the mixed strategies of
player 1 and 2, respectively. Probabilities $p_{1}$ and $p_{2}$
($q_{1}$ and $q_{2}$) are associated with actions $C \times
\mathds{1}$ and $C \times \sigma_{x}$, the pair $p_{2i + 1}, p_{2i
+ 2}$ (the pair $q_{2i + 1}, q_{2i + 2}$) concerns $Q_{i} \times
\mathds{1}$ and $Q_{i} \times \sigma_{x}$ for $i = 1,2,\dots,n$.
For each $i,j \in \{0,1, \dots, n\}$, we define a density operator
$\rho_{ij}$,
\begin{align}
\rho_{ij} &= p_{2i + 1}q_{2j +
1}\mathds{1}\otimes\mathds{1}|\phi_{ij}\rangle \langle \phi_{ij}|
\mathds{1}\otimes \mathds{1} +  p_{2i + 1}q_{2j +
2}\mathds{1}\otimes\sigma_{x}|\phi_{ij}\rangle \langle \phi_{ij}|
\mathds{1}\otimes \sigma_{x} \nonumber \\
&\quad+p_{2i + 2}q_{2j +
1}\sigma_{x}\otimes\mathds{1}|\phi_{ij}\rangle \langle \phi_{ij}|
\sigma_{x}\otimes \mathds{1} + p_{2i + 2}q_{2j +
2}\sigma_{x}\otimes\sigma_{x}|\phi_{ij}\rangle \langle \phi_{ij}|
\sigma_{x}\otimes \sigma_{x},
\end{align}
where
\begin{equation}
|\phi_{ij}\rangle =\left\{\begin{array}{c} |\psi_{i}\rangle\quad
\mbox{if}\quad i = j \ne 0;\\ |00\rangle\quad \mbox{if}\quad
\mbox{otherwise}.
\end{array}\right.
\end{equation}
Then the general final state corresponding to (\ref{ostatnia4x4})
is $\rho_{\mathrm{ext}} = \sum^{n}_{i,j = 0}\rho_{ij}$.
\paragraph{More local strategies} We showed in
\cite{commentfracor} how to construct the scheme for $n\times m$
bimatrix games according to the MW approach. In this case, player
1 (player 2) has $n$ operators $U_{i}$ ($m$ operators $V_{j}$)
defined on $\mathbb{C}^n$ ($\mathbb{C}^m$) that act on basis
states $\{|0\rangle, |1\rangle, \dots, |n-1\rangle\}$
($\{|0\rangle, |1\rangle, \dots, |m-1\rangle\}$) as follows:
\begin{equation}
\begin{array}{ll}
U_{0}|i\rangle = |i\rangle & V_{0}|i\rangle = |i\rangle, \\ U_{1}|i\rangle = |i + 1 \bmod n\rangle & V_{1}|i\rangle = |i + 1 \bmod m\rangle, \\
~~~~~~~~~~~~~~~~\vdots  \\ U_{n-1}|i\rangle = |i + (n-1) \bmod
n\rangle & V_{m-1}|i\rangle = |i + (l-1) \bmod m \rangle.
\end{array}
\end{equation}
The construction of the final state $\rho_{\mathrm{ext}}$ is
analogous to definition~\ref{nowyskim} but the set of pure
strategies for player 1 and 2 is now $\{C, Q\} \times \{U_{0},
\dots, U_{n-1}\}$ and $\{C, Q\} \times \{V_{0}, \dots, V_{m-1}\}$,
respectively.
\section{Conclusions}
During the last fourteen years of research into quantum games, the
MW \cite{mw} and EWL~\cite{ewl}  concepts have become the most
commonly used quantum schemes for $2 \times 2$ games. However, the
undesirable features of the MW scheme question if it may reflect a
real quantum game. Our protocol is closer to the EWL concept. In
both the EWL and eMW scheme there is the need for entangled states
to generate nonclassical results. Next, turning the quantum game
into classical one can be obtained just by the restriction of
players strategies. It is also possible to study the case where
only one of the players has access to the quantum strategy (in the
EWL it is performed by the appropriate restriction of the player's
strategy set). It could be done by allowing only one of players to
decide whether to play the state $|00\rangle$ of
$|\psi_{\mathrm{in}}\rangle$, i.e., only one of the players has
access to the actions $C$ and $Q$.

At the same time, the eMW scheme is much simpler that the EWL
scheme for studying high-dimensional bimatrix games. Thus, it may
constitute an alternative to the EWL scheme.
%%%%%%%%%%%%%%%%%%%%%%%%%%%%%%%%%%%%%%%%%%%%%%%%%%%%%%%%%%%%%%%%%%%%%%%%%%%%%
\section*{Acknowledgments}
The project was supported by the Polish National Science
Center under the project DEC-2011/03/N/ST1/02940.

\end{document}